  \providecommand\BibTeX{{%
    \normalfont B\kern-0.5em{\scshape i\kern-0.25em b}\kern-0.8em\TeX}}}
\newcommand*{\nn}[1]{\langle#1\rangle} 
\newcommand*{\class}[1]{\textbf{#1}} 
\newtheorem{problem}{Problem}
\newtheorem{claim}{Claim}
\newtheorem{fact}{Fact}
\newenvironment{repclaim}[1]{%
  \repclaiminner
}{\endrepclaiminner}
\begin{document}

\title{Non-NP-Hardness of Translationally-Invariant Spin-Model Problems}

\author{Rotem Liss}
\email{rotemliss@cs.technion.ac.il}
\author{Tal Mor}
\email{talmo@cs.technion.ac.il}
\author{Roman Shapira}
\email{romanshap@cs.technion.ac.il}
\affiliation{%
  \institution{Technion --- Israel Institute of Technology}
  \streetaddress{Technion City}
  \city{Haifa}
  \country{Israel}
  \postcode{3200003}
}

\keywords{Quantum Complexity, Local Hamiltonian Problem, Mahaney's theorem, Fortune's theorem, Sparse languages, Ising, Heisenberg, t-J, Hubbard, Fermi-Hubbard}

\begin{abstract}
Finding the ground state energy of the Heisenberg Hamiltonian is an important problem in the field of condensed matter physics.
In some configurations, such as the antiferromagnetic translationally-invariant case on the 2D square lattice, its exact ground state energy is still unknown.
We show that finding the ground state energy of the Heisenberg model cannot be an \class{NP-Hard} problem unless \class{P}=\class{NP}.
We prove this result using a reduction to a sparse set and certain theorems from computational complexity theory.
The result hints at the potential tractability of the problem and encourages further research towards a positive complexity result.
In addition, we prove similar results for many similarly structured Hamiltonian problems, including certain forms of the Ising, t-J, and Fermi-Hubbard models.
\end{abstract}

\maketitle

\section{Introduction} \label{sec:intro}
\subsection{Background and Motivation}
Quantum complexity theory classifies the difficulty of computational problems using quantum computing resources, inspired by methods from classical complexity theory.
Important quantum complexity classes include \class{BQP}, the class of problems efficiently solvable by a quantum computer, and \class{QMA}, the class of problems efficiently verifiable by a quantum computer given a quantum hint.
Most problems dealt with in the field come naturally from the physical world, with a prominent example being the $k$-local Hamiltonian (promise) problem.
The $k$-local Hamiltonian problem is deciding whether the ground state energy (lowest eigenvalue) of a given $k$-local Hamiltonian (a sum of $Poly(n)$ Hermitian matrices, each acting on at most $k$ qubits, where $n$ is the number of qubits) is below a given threshold $\alpha$ or above a given threshold $\beta$, promised one of these to be the case.
A cornerstone result of the theory is the ``Quantum Cook-Levin'' theorem by Kitaev, Shen and Vyalyi \cite{ksv02}, showing the $k$-local Hamiltonian problem to be a complete problem for the class \class{QMA}, for $k = O(\log(n))$.
Since the result of \cite{ksv02}, researchers have invested significant efforts to find more \class{QMA-Complete} problems (for a partial review, see \cite{boo14}).
Since similarly to $\class{NP} \neq \class{P}$ it is widely conjectured that $\class{QMA}\neq\class{BQP}$, such results provide evidence of said problems being likely intractable on a quantum computer.

Nevertheless, some special or restricted cases of \class{QMA-Complete} problems, that are still of physical importance, remain unclassified or turn out to be tractable on either classical or quantum computers.
As a simple example, the $k$-local Hamiltonian problem is in \class{P} for $k=1$.
Another example is the quantum $k$-SAT problem, a version of the $k$-local Hamiltonian problem that allows only projection operators for the local Hamiltonian terms.
It turns out that the problem is \class{QMA\textsubscript{1}-Complete}
\footnote{\class{QMA\textsubscript{1}} is \class{QMA} with no error on the ``yes'' instances}
for $k \geq 3$ \cite{gn16}, but it is in \class{P} for $k=2$ \cite{bra11};
thus studying restrictions of known \class{QMA-Complete} problems may bear useful or surprising results.

\subsection{The Heisenberg Hamiltonian}
In this paper, we are interested in analyzing the complexity of a specific restriction of the local Hamiltonian ground state energy problem:
the Heisenberg Hamiltonian, on an $N\times N$ two-dimensional (2D) square lattice, with a translationally-invariant (TI) interaction strength $J$ and \emph{no 1-local terms}.
The Hamiltonian is of the following form:
\begin{equation} \label{eq:heis}
    H_{\text{TI-Heisenberg}}=J\sum_{\nn{i,j}}\vec{\sigma}_{i}\cdot\vec{\sigma}_{j}=J\sum_{\nn{i,j}}(X_{i}X_{j}+Y_{i}Y_{j}+Z_{i}Z_{j}),
\end{equation}
with $\nn{i,j}$ denoting nearest neighbors on the 2D square lattice and $X, Y, Z$ denoting the Pauli matrices.
Our specific target is the antiferromagnetic case ($J>0$), as the ferromagnetic case ($J<0$) has been exactly solved (see, for example, Chapter 6.1 in \cite{faz99}) with its ground state energy scaling polynomially with $N$, placing the problem corresponding to the ferromagnetic case in \class{P}.

Despite restricting ourselves to what seems to be a narrow case, the antiferromagnetic Heisenberg Hamiltonian given by eq.~\eqref{eq:heis} is a useful and extensively studied model in condensed-matter physics; 
for example, see \cite{ll09, sdo94, lc92, md90, lin90, th89}.
Although past studies have established various properties and symmetries of the ground state, to the best of our knowledge, no closed-form solution for the ground state or its energy has been found yet.
Therefore, attempting to analyze the model as a computational problem with complexity theoretic tools may prove beneficial in uncovering its tractability.

\subsection{Related and Previous Work} \label{subsec:prev_work}
Local Hamiltonian problems based on the Heisenberg model under various constraints have been previously studied.
In \cite{sv09}, Schuch and Verstraete showed a Heisenberg Hamiltonian with additional arbitrary 1-local terms $\vec{B}_i$ (one term per each lattice site) is \class{QMA-Complete}:
\begin{equation} \label{eq:sv09}
    H_{\text{Heisenberg-1-local}}=J\sum_{\nn{i,j}}\vec{\sigma}_{i}\cdot\vec{\sigma}_{j}+\sum_{i}\vec{B}_{i}\cdot\vec{\sigma}_{i}.
\end{equation}
Then, in \cite{pm17}, Piddock and Montanaro showed a Heisenberg Hamiltonian \emph{not} restricted to a 2D lattice and with variable, edge-dependent interaction strengths $J_{ij}$, but still without 1-local terms, is \class{QMA-Complete}:
\begin{equation} \label{eq:pm17}
    H_{\text{Heisenberg}}=\sum_{i,j}J_{ij}\vec{\sigma}_{i}\cdot\vec{\sigma}_{j}.
\end{equation}
Being significant generalizations of eq.~\eqref{eq:heis}, it is thus reasonable that both Hamiltonian problems given by eqs.~\eqref{eq:sv09} and \eqref{eq:pm17} were shown to be \class{QMA-Complete}.

For the problem of eq.~\eqref{eq:heis}, however, the same result most likely does not hold.
An observation made by Cubitt and Montanaro in \cite{cm16} (Observation 30) shows the Heisenberg Hamiltonian problem of eq.~\eqref{eq:heis} is in the class \class{StoqMA} (as defined in \cite{bbt06}, see also \cite{bdot08}), as it can be reduced to a \emph{stoquastic} local Hamiltonian problem (see Appendix~\ref{sec:stoqma} for details).
It is shown by \cite{bbt06} that $\class{MA} \subseteq \class{StoqMA} \subseteq \class{AM}$, and thus \class{StoqMA} is in the polynomial hierarchy.
It is therefore unlikely the problem of eq.~\eqref{eq:heis} is in \class{QMA-Complete}, since \class{QMA} (and even \class{BQP}) is conjectured to lie outside the polynomial hierarchy \cite{aar10}.

Knowing the problem is in \class{StoqMA}, we naturally wonder whether it is also \class{StoqMA-Complete}, or in some smaller complexity class than \class{StoqMA}.
For this, we consider two related Hamiltonians discussed in Appendix E of \cite{cm16}; the fact their corresponding problems are both in \class{P} hints that our Hamiltonian problem (of eq.~\eqref{eq:heis}) may, too, be tractable.
These are the Heisenberg Hamiltonian on the complete graph of $N$ vertices, and the Heisenberg Hamiltonian on the complete bipartite graph of $2N$ vertices (also known as the Lieb-Mattis model).
The exact solutions for the ground states of both Hamiltonians are shown in \cite{cm16}, with the ground state energies being a polynomial in $N$, placing both computational problems in \class{P}.
While the geometries of these Hamiltonians differ from the geometries of the 2D square lattice present in the problem of eq.~\eqref{eq:heis}, all three problems share a common trait:
it is enough to supply just $N$ and $J$ as input arguments, as the geometry of the underlying Hamiltonian may be encoded in the solving algorithm.
As both problems of \cite{cm16} lie in \class{P}, this serves as potential evidence that the problem of eq.~\eqref{eq:heis} may also be in \class{P} or at least in some tractable complexity class.

Lastly, we note that the same translationally-invariant antiferromagnetic Heisenberg Hamiltonian on a \emph{1-D chain} of $N$ spins (instead of a 2-D lattice) was also widely studied and solved by Bethe in 1931 \cite{Bet31}.
The exact ground state energy has later been found by Hulthén \cite{hul38}, and established to be $E_0=N\cdot(1/4-\ln(2))$ (in the limit $N \rightarrow \infty$)\footnote{For a derivation of the ground state energy using Bethe's solution, see \cite{Mat81}, chapter 5.10}, placing the corresponding computational problem in \class{P}.
This result serves as additional (albeit weaker\footnote{Increasing the dimension of a problem may increase its complexity; for example, the classical Ising Hamiltonian with arbitrary weights and no 1-local terms is in \class{P} in the 2-D lattice case but is \class{NP-Complete} in the 3-D case \cite{bar82}}) evidence that our model of interest could be tractable.

\subsection{Our Contribution}
In this paper, we prove the following result for the complexity of the problem of eq.~\eqref{eq:heis}:

\begin{claim} \label{clm:main}
    If $\class{P}\neq\class{NP}$, the Hamiltonian promise problem of eq.~\eqref{eq:heis} is neither $\class{NP-Hard}$ nor $\class{coNP-Hard}$.
\end{claim}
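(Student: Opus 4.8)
The plan is to exhibit a polynomial-time many-one reduction from the promise problem of eq.~\eqref{eq:heis} to a \emph{sparse} language, and then invoke Mahaney's theorem (a sparse $\le_m^p$-hard set for $\class{NP}$ forces $\class{P}=\class{NP}$) together with Fortune's theorem (a sparse $\le_m^p$-hard set for $\class{coNP}$ forces $\class{P}=\class{NP}$). Since reductions compose --- and composing an $\class{NP}$-hardness (resp.\ $\class{coNP}$-hardness) reduction with the reduction to the sparse set is still a valid $\le_m^p$-reduction, even though the problem of eq.~\eqref{eq:heis} is a promise problem sitting in the middle of the chain --- $\class{NP}$-hardness (resp.\ $\class{coNP}$-hardness) of the Heisenberg problem would produce a sparse $\class{NP}$-hard (resp.\ $\class{coNP}$-hard) set, hence $\class{P}=\class{NP}$. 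Taking the contrapositive of both implications yields the claim.

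The heart of the argument is the observation already stressed above: once the lattice geometry is absorbed into the solving algorithm, an instance is specified by the single integer $N$ (supplied essentially in unary, so that the instance length is polynomial in $N$, matching the $N^2$-qubit system), the interaction strength --- which for the antiferromagnetic case of interest we fix to $J=1$ by rescaling the thresholds, the case $J\le 0$ already being in $\class{P}$ --- and two thresholds $\alpha<\beta$ whose gap $\beta-\alpha$ is at least the fixed inverse-polynomial promise gap $1/p(N)$. The yes/no answer depends only on $N$ and on the position of $\alpha,\beta$ relative to the \emph{single fixed real number} $e_N := \lambda_{\min}\big(\sum_{\nn{i,j}}\vec{\sigma}_i\cdot\vec{\sigma}_j\big)$ on the $N\times N$ lattice, which lies in the polynomially bounded interval $[-6N^2,0]$. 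Crucially, knowing $e_N$ to additive precision $1/(3p(N))$ --- that is, $O(\log N)$ bits --- already settles every admissible instance.

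Concretely, I would define the target set $S$ so that $\nn{1^N,m}\in S$ exactly when $m$ is an integer with $-\lceil 18N^2 p(N)\rceil\le m\le 0$ and $e_N\le m/(3p(N))$; informally, $S$ records, for each $N$, which points of a mesh of width $1/(3p(N))$ over $[-6N^2,0]$ lie above $e_N$. For each $N$ this contributes at most $\mathrm{poly}(N)$ strings, all of length $\Theta(N)$, so $S$ is sparse. The reduction, on input $(N,\alpha,\beta)$, first clamps $\alpha,\beta$ to $[-6N^2,0]$ (out-of-range thresholds make the instance trivially yes or no and map to fixed strings inside/outside $S$), then computes in polynomial time an integer $m$ with $\alpha<m/(3p(N))<\beta$ --- which exists because the mesh is finer than the promise gap --- and outputs $\nn{1^N,m}$. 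One checks directly that the instance is a yes-instance iff $e_N\le\alpha$ iff (using $\alpha<m/(3p(N))<\beta$ and the promise $e_N\notin(\alpha,\beta)$) $e_N\le m/(3p(N))$ iff $\nn{1^N,m}\in S$, which is exactly what a valid $\le_m^p$-reduction requires.

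The steps requiring the most care are, first, making precise the encoding conventions so that the ``only $N$, the fixed $J$, and two bounded thresholds matter'' claim is literally correct --- in particular fixing the promise gap as a function of the lattice size rather than of the threshold bit-length, and keeping $J$ bounded (or normalized), since an unbounded $J$ would force $S$ to resolve $e_N$ to unboundedly fine precision and so destroy sparsity; and second, the routine but load-bearing verification that $S$ as defined is genuinely sparse (counting strings by length, which is why $N$ must be supplied in unary) and that Mahaney's and Fortune's theorems are being applied to $\le_m^p$-hardness, a property that survives having a promise problem in the middle of the reduction chain. I do not expect any deeper obstacle beyond this bookkeeping: the mathematical content is the ``one fixed number $e_N$, needed only to $O(\log N)$ bits'' observation, combined with the two classical sparse-set theorems. (As a side remark, the same idea already gives a polynomial-time Turing reduction of the problem to a tally set, hence membership in $\class{P}/\mathrm{poly}$ and a Karp--Lipton-style collapse under hardness; the many-one reduction above is what upgrades this to the clean ``$\class{P}=\class{NP}$'' conclusions.)
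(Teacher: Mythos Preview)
Your proposal is correct and follows the same overall strategy as the paper: reduce the Heisenberg promise problem in polynomial time to a sparse set, then invoke Mahaney's and Fortune's theorems. The implementations differ in two respects worth noting. First, you normalize $J$ away by rescaling the thresholds and then reduce to a genuine sparse \emph{language} $S$ (mesh points of spacing $1/(3p(N))$ lying above the single number $e_N$), so you can apply the classical Mahaney and Fortune theorems verbatim; the paper instead truncates each of $J,\alpha,\beta$ to $O(\log N)$ bits, landing in a sparse \emph{promise problem} (the same Heisenberg problem with short parameters), and then proves a one-line extension of Mahaney/Fortune to sparse promise problems. Your route is slightly leaner for this specific claim, while the paper's ``truncate each parameter'' reduction is more modular and is what it reuses for the $d$-dimensional $O(1)$-unit-cell, Fermi--Hubbard, and t--J generalizations, where there are several coupling constants $J_Q$ and no single rescaling eliminates them. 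Second, your caveat about encoding conventions is exactly right and matches the paper's care: $N$ in unary, $J$ bounded by $\mathrm{Poly}(N)$, and a fixed inverse-polynomial promise gap are precisely the hypotheses needed so that after rescaling (or truncating) only $O(\log N)$ bits of threshold information remain, which is what makes the target set sparse.
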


As an immediate consequence of Claim~\ref{clm:main}, we can refute the problem's membership in \class{StoqMA-Complete}, as that would imply it belongs to \class{StoqMA-Hard}, which is a subclass of \class{NP-Hard}.
In fact, the problem cannot be \class{C-Complete} for any class $\class{C} \supseteq \class{NP}$ or $\class{C} \supseteq \class{coNP}$.
This result provides evidence for the conjecture that the problem lies in some tractable complexity class, even though finding its definite solution has so far proved challenging.

The proof of Claim~\ref{clm:main} relies on two theorems in complexity theory:
Mahaney's theorem \cite{mah82} and Fortune's theorem \cite{for79}, which allow us to exclude the problem from \class{NP-Hard} and \class{coNP-Hard}, respectively.
Both theorems are formulated for decision problems (languages) and employ the notion of \emph{sparse languages}, which will be defined in Section~\ref{sec:prelim}, along with the precise definition of the Heisenberg Hamiltonian problem.

We present our results in Section~\ref{sec:res}, beginning with the proof of Claim~\ref{clm:main} in Subsections~\ref{subsec:red_to_sparse}--\ref{subsec:conc}.
We then extend the result to additional models:
in Subsection~\ref{subsec:uc-h} we show the result applies to a family of d-dimensional ``Unit-Cell" translationally-invariant Hamiltonians.
Then, in Subsection~\ref{subsec:hubb} we prove the result holds for other translationally-invariant spin-models: the Fermi-Hubbard and t-J models, relevant for quantum chemistry and high temperature superconductivity, respectively.
Lastly, in Section~\ref{sec:disc} we provide a discussion of our results, including the non-\class{NP-Hardness} of various cases of the Ising model (defined in Appendix~\ref{sec:ising}) and the implications of this finding.


\section{Preliminaries} \label{sec:prelim}
We begin this section by providing a formal definition of the main computational (promise) problem we analyze in this work, based on eq.~\eqref{eq:heis}, and discuss several subtleties of the definition.
We continue by defining sparse languages and sparse promise problems.
We conclude the section by presenting Mahaney's and Fortune's theorems.

\subsection{Problem Statement}

\begin{problem}[Two-dimensional (2D) translationally-invariant (TI) Heisenberg Hamiltonian]
\label{prob:heis}
    The problem is defined as follows:
    \begin{itemize}
        \item Inputs: \begin{itemize}
            \item $N$ : 2D square lattice side length (encoded in unary)
            \item $J$ : translationally-invariant interaction strength (a fixed-precision real number, encoded in binary) such that $0 < J < Poly(N)$
            \item $\alpha, \beta$ : energy threshold arguments (fixed-precision real numbers, encoded in binary) such that $\beta-\alpha > 1/Poly(N)$
        \end{itemize}
        \item Goal: decide whether $\lambda(H_{\text{TI-Heisenberg}}) < \alpha$ (a ``Yes'' instance) or $\lambda(H_{\text{TI-Heisenberg}}) > \beta$ (a ``No'' instance), promised one of these to be the case, where:
        \begin{equation} \tag{\ref{eq:heis}}
            H_{\text{TI-Heisenberg}}=J\sum_{\nn{i,j}}\vec{\sigma}_{i}\cdot\vec{\sigma}_{j}=J\sum_{\nn{i,j}}(X_{i}X_{j}+Y_{i}Y_{j}+Z_{i}Z_{j}),
        \end{equation}
        with $\nn{i,j}$ denoting nearest neighbors on the 2D square lattice, $X, Y, Z$ denoting the Pauli operators, and $\lambda(H)$ denoting the ground state energy of a Hamiltonian $H$.
    \end{itemize}
\end{problem}

Following are several remarks about the definition:

\paragraph{Fixed-precision real numbers}
The arguments $J, \alpha, \beta$ are represented in binary as fixed-precision real numbers.
We assume each fixed-precision real number $A$ is represented by an integer part of $P = f(N)$ bits and a fractional part of $Q = g(N)$ bits for some functions $f, g$, such that:
\begin{equation}
    A = \sum_{i=0}^{P-1}A^{\text{Int}}_{i}2^i + \sum_{i=1}^{Q}\frac{A^{\text{Frac}}_{i}}{2^i},
\end{equation}
where $A^{\text{Int}}_{i}, A^{\text{Frac}}_{i} \in \left\{-1,0,1\right\}$ are the $i$'th \emph{signed} bits of the integer and fractional parts of $A$, respectively.

\paragraph{Topology of the Hamiltonian}
The Hamiltonian given in Problem~\ref{prob:heis} is on a \emph{planar} 2D square lattice with open boundary conditions.
However, our results also hold if we put the 2D square lattice on a torus (with closed boundary conditions), where each vertex has an equal number of nearest neighbors.

\paragraph{Choice of unary encoding for $N$}
In the general Heisenberg Hamiltonian problems analyzed by \cite{sv09} and \cite{pm17} (eqs.~\eqref{eq:sv09}--\eqref{eq:pm17}), $N^2$ 1-local terms $\vec{B}_i$ and $O(N^2)$ interaction strengths $J_{ij}$ are supplied as input arguments, respectively. 
Thus in these problems, the total input length scales polynomially with the lattice side length $N$ (representing the system size).
As we are interested in achieving similar scaling in our problem, we choose $N$ to be supplied in unary\footnote{
The reader is encouraged to compare to \cite{gi13}, where a binary encoding of $N$ is applied, leading to results with \class{QMA\textsubscript{EXP}}-completeness due to (in part) exponential scaling between input length and system size.}.

\subsection{Sparse Languages and Promise Problems}
\begin{definition}[Sparse language] \label{def:sparse_l} \cite{mah82}
A language $L \subseteq \Sigma^{*}$ is called \emph{sparse} if there exists a polynomial $P$ such that for every $n \in \mathbb{N}$ there are at most $P(n)$ words of length at most $n$ in $L$.
\end{definition}

Inspired by the Definition~\ref{def:sparse_l} of sparse languages, we define sparse promise problems as follows:
\begin{definition}[Sparse promise problem] \label{def:sparse_pp}
A promise problem $\Pi = (\Pi_{Yes}, \Pi_{No})$ is called \emph{sparse} if the language $L \triangleq \Pi_{Yes}$ is sparse.
\end{definition}
Note that we could have defined sparse promise problems differently, for example by requiring the full promise $\Pi_{Yes} \cup \Pi_{No}$ to be a sparse language.
Our definition is chosen mainly because it straightforwardly generalizes the definition of sparse languages (namely, a language $L$ is sparse if and only if its corresponding promise problem $\Pi = (L, \overline{L})$ is sparse).
Another advantage of this choice is that Mahaney's and Fortune's theorems, described in Subsection~\ref{subsec:mah_for}, easily generalize to sparse promise problems.

\subsection{Mahaney's and Fortune's Theorems} \label{subsec:mah_for}
Using Definition~\ref{def:sparse_l}, in 1979 Fortune proved the following theorem (theorem 1 in \cite{for79}):
\begin{theorem}[Fortune] \label{thm:for}
If there exists a sparse language $L$ that is \class{coNP-Hard}, then $\class{P} = \class{NP}$
\end{theorem}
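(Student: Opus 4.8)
The plan is to prove the statement directly by showing that the hypothesis forces $\mathrm{SAT} \in \class{P}$; since $\mathrm{SAT}$ is \class{NP-Complete}, this gives $\class{P} = \class{NP}$. Let $S$ be the assumed sparse \class{coNP-Hard} language. Since $\mathrm{UNSAT}$, the set of unsatisfiable propositional (say, CNF) formulas, lies in \class{coNP}, there is a polynomial-time computable reduction $f$ with $\psi \in \mathrm{UNSAT}$ if and only if $f(\psi) \in S$; equivalently, $\psi$ is satisfiable if and only if $f(\psi) \notin S$. I will use this $f$ to prune the self-reduction search tree for $\mathrm{SAT}$, exploiting sparseness to keep the tree polynomially wide at every level.

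Fix an input formula $F$ of size $m$ on variables $x_1,\dots,x_n$, and consider the tree whose node indexed by $v \in \{0,1\}^{i}$ is the residual formula $F_v$ obtained by substituting $v$ for $x_1,\dots,x_i$. Substitution never increases formula size, so every $F_v$ has size at most $m$ and hence $|f(F_v)| \le p(m)$ for some fixed polynomial $p$ determined by $f$; since $S$ is sparse, $|S \cap \Sigma^{\le p(m)}| \le N_0$ for some $N_0$ that is polynomial in $m$. I process the tree level by level, maintaining a set $A_i$ of residual formulas under the invariant that $F$ is satisfiable if and only if some formula in $A_i$ is satisfiable, together with the size bound $|A_i| \le N_0$.

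Given $A_i$, I form the two children of each node (substituting $x_{i+1}=0$ and $x_{i+1}=1$), discard those that evaluate to the constant FALSE, and halt outputting ``satisfiable'' if some child evaluates to the constant TRUE. Then I deduplicate by the value of $f$: if $f(\psi) = f(\psi')$ then $\psi$ and $\psi'$ are satisfiable or unsatisfiable together, so keeping just one preserves the invariant. The key observation is that if more than $N_0$ formulas survive deduplication, they have more than $N_0$ pairwise-distinct $f$-images, so at least one image avoids $S \cap \Sigma^{\le p(m)}$, which certifies that the corresponding formula — and therefore $F$ — is satisfiable; so I halt outputting ``satisfiable.'' Otherwise at most $N_0$ formulas remain, and they become $A_{i+1}$. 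If the sweep completes all $n$ levels without halting, then $A_n$ is empty (every surviving formula is closed, the FALSE ones were discarded, and a TRUE one would have triggered a halt), so by the invariant $F$ is unsatisfiable. The running time is polynomial: $n$ levels, each performing $O(N_0)$ evaluations of $f$ together with formula simplifications.

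The step that needs the right idea — and the one I expect to be the main obstacle — is the pruning. The obvious approach, deleting unsatisfiable nodes, is blocked because we have no oracle for $S$; the point is that deleting them is unnecessary. Deduplicating by $f$ is always safe, and the sparseness bound $N_0$ does double duty as a stopping rule whose violation is \emph{itself} a proof of satisfiability via the pigeonhole step above. Once this is in place, verifying that the invariant survives child-expansion and the FALSE/TRUE shortcuts, and checking the termination at level $n$, is routine bookkeeping. I would also note that this argument is tailored to \class{coNP}-hardness and does not transfer verbatim to the analogous sparse-\class{NP-Hard} statement (Mahaney's theorem), whose proof requires an additional counting/census ingredient.
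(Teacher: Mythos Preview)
Your proof is correct and follows exactly the approach the paper attributes to Fortune: a polynomial-time algorithm for $\mathrm{SAT}$ obtained by pruning the self-reduction tree via the reduction to the sparse set. The paper itself does not spell out the argument beyond that one-sentence description, so your write-up is in fact more detailed than what appears there, but the strategy is the same.
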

The proof given by Fortune shows a polynomial time algorithm for the language $SAT$, assuming a co-sparse \class{NP-Hard} language $L$ exists (a co-sparse language is a language $L$ whose complement $\bar{L}$ is sparse).

Following Fortune, in 1982 Mahaney showed the complementary result \cite{mah82}:
\begin{theorem}[Mahaney] \label{thm:mah}
If there exists a sparse language $L$ that is \class{NP-Hard}, then $\class{P} = \class{NP}$
\end{theorem}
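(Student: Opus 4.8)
Because the statement is precisely Mahaney's theorem, the plan is to reconstruct the by-now-standard ``left-set'' argument (in the style of Ogihara and Watanabe), which is cleaner than Mahaney's original 1982 proof; the original instead reaches $\class{P}=\class{NP}$ by first passing through Berman's theorem that even a \emph{tally} \class{NP-Hard} set already collapses \class{P} and \class{NP}. Either way, the target is a deterministic polynomial-time decision procedure for $SAT$, which suffices because $SAT$ is \class{NP-Complete}. The procedure exploits the disjunctive self-reducibility of $SAT$ --- a formula $\varphi$ on variables $x_1,\dots,x_n$ is satisfiable iff $\varphi|_{x_1=0}$ or $\varphi|_{x_1=1}$ is --- run as a breadth-first search over the assignment tree whose frontier is kept to polynomial width by pushing it through a many-one reduction into the sparse set $L$.

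The first move is to reduce not $SAT$ itself but its \emph{left set}
\[
  \mathrm{Left}(SAT)=\bigl\{\langle\varphi,y\rangle : \varphi \text{ has a satisfying assignment } z \text{ with } z\le_{\mathrm{lex}} y\bigr\},
\]
where $\le_{\mathrm{lex}}$ is lexicographic order on length-$n$ assignments; this language is in \class{NP}, so \class{NP-Hardness} of $L$ yields a polynomial-time $h$ many-one reducing $\mathrm{Left}(SAT)$ to $L$, together with a polynomial $q$ bounding $|h(\cdot)|$ on the relevant inputs. On input $\varphi$ I then maintain a set $V_k\subseteq\{0,1\}^k$ of surviving length-$k$ prefixes: start with $V_0=\{\varepsilon\}$, and obtain $V_{k+1}$ from $V_k$ by (i) appending both bits to each prefix, yielding at most $2|V_k|$ candidates; (ii) computing for each candidate $w$ its \emph{boundary image} $h(\langle\varphi,\,w1^{n-k-1}\rangle)$ (note that $w1^{n-k-1}$ is the lexicographically largest completion of $w$); (iii) whenever two candidates $u<_{\mathrm{lex}}v$ have the same boundary image, discarding $v$; and (iv) finally retaining only the lexicographically \emph{largest} $P(q(|\varphi|))$ of the remaining candidates, where $P$ is the sparseness polynomial of $L$. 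After $n$ levels I accept iff some assignment in $V_n$ satisfies $\varphi$. Crucially, every step uses only evaluations of $h$ and equality tests on its outputs, so no membership oracle for $L$ is ever needed --- the sparseness of $L$ enters only in the analysis.

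Correctness rests on the invariant: if $\varphi$ is satisfiable and $a^\star$ denotes its lexicographically least satisfying assignment, then the length-$k$ prefix of $a^\star$ lies in $V_k$. Step (iii) cannot discard $a^\star$'s prefix: were $v$ equal to that prefix, its subtree would contain $a^\star$, so $\langle\varphi,v1^{n-k-1}\rangle\in\mathrm{Left}(SAT)$, and the colliding smaller candidate $u$ (equal boundary image, hence equal membership through $h$) would give $\langle\varphi,u1^{n-k-1}\rangle\in\mathrm{Left}(SAT)$ too, producing a satisfying assignment whose $(k{+}1)$-bit prefix is $\le_{\mathrm{lex}}u<_{\mathrm{lex}}v$ and therefore strictly lexicographically below $a^\star$ --- a contradiction. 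Step (iv) is safe by monotonicity: sorting the surviving candidates lexicographically, membership of the boundary query in $\mathrm{Left}(SAT)$ propagates from any candidate to all larger ones, so the candidates with boundary query in $\mathrm{Left}(SAT)$ occupy a terminal segment; each of their boundary images lies in $L$, has length at most $q(|\varphi|)$, and (by step (iii)) is distinct, so sparseness of $L$ caps their count at $P(q(|\varphi|))$. Since the boundary query of $a^\star$'s prefix lies in $\mathrm{Left}(SAT)$ (witnessed by $a^\star$), that prefix sits in this terminal segment of size $\le P(q(|\varphi|))$ and survives step (iv); the same bound gives $|V_{k+1}|\le P(q(|\varphi|))=\mathrm{poly}(|\varphi|)$ uniformly in $k$, so the algorithm runs in polynomial time and $SAT\in\class{P}$, hence $\class{P}=\class{NP}$.

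The main obstacle is exactly this double requirement on the pruning rule: it must be (a) implementable using nothing about $L$ beyond the reduction map $h$ --- in particular one may not test membership in $L$ --- yet (b) provably retain the prefix of the lexicographically least satisfying assignment at every level. The left-set formulation is what squares this circle: it converts the question ``which subtrees still matter'' into a monotone threshold along the lexicographic order, so that keeping a polynomially large lex-largest block is simultaneously cheap and safe. Pinning down this monotonicity/threshold bookkeeping and the collision-safety argument precisely, and verifying that the polynomial width bound genuinely survives the doubling-then-pruning at each of the $n$ levels, are the delicate points; the running-time accounting and the concluding $\class{P}=\class{NP}$ are then routine.
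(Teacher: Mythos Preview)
The paper does not actually prove Mahaney's theorem: it states the result, cites Mahaney's original paper, and then remarks that ``a simplified proof based on a polynomial time algorithm for $SAT$ has been presented by Grochow.'' Your proposal is correct and is precisely that simplified route --- the Ogihara--Watanabe left-set argument that Grochow exposits --- so it aligns with what the paper points to even though the paper itself supplies no proof.
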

Although the original proof of the theorem is quite complicated, a simplified proof based on a polynomial time algorithm for $SAT$ has been presented by Grochow in \cite{gro16}.

We later generalize the results of Fortune and Mahaney to promise problems in Subsection~\ref{subsec:mah_promise}.

\section{Results} \label{sec:res}
\subsection{Reducing 2D TI Heisenberg Problem to its Sparse Version} \label{subsec:red_to_sparse}
Our first step is reducing Problem~\ref{prob:heis} to its following ``sparse'' version:
\begin{problem}[Sparse two-dimensional (2D) translationally-invariant (TI) Heisenberg Hamiltonian]
\label{prob:sparse_heis}
    Defined similarly to Problem~\ref{prob:heis}, except that the input arguments $J, \alpha, \beta$ are each specified using $O(\log(N))$ bits.
    $N$ is still specified in unary, with $N$ bits.
\end{problem}
Thus we have for each $N$ at most $2^{O(\log(N))} = Poly(N)$ words in $\Pi_{Yes}$, due to a single choice for the argument $N$ (the unary $1^N$) and $2^{O(\log(N))}$ binary choices for the rest of the arguments.
Since the total input length is $\theta(N)$, with appropriate padding of the input arguments it can be shown Problem~\ref{prob:sparse_heis} is a sparse promise problem (by Definition~\ref{def:sparse_pp}).

For each of the integer and fractional parts of the input arguments $J, \alpha, \beta$, we reduce a problem instance with $\Omega(\log(N))$ bits to an equivalent problem instance with $O(\log(N))$ bits.
Thus the overall reduction composes the reductions given below a constant number of times, each time on a different argument.

\paragraph{Integer part of $J$}
According to the definition of Problem~\ref{prob:heis}, $J \in (0, P(N))$ for some polynomial $P$.
Thus the integer part of $J$ is trivially specified by at most $O(\log(P(N))) = O(\log(N))$ bits.

\paragraph{Fractional part of $J$}
Given an input Hamiltonian $H$ represented by the input arguments $N, J, \alpha, \beta$ where the fractional part of $J$ is specified by $M = \Omega(\log(N))$ bits,
define $J'$ as the truncation of $J$ to $L = k\log(N)$ bits in the fractional part (for some constant $k > 0$ to be determined later), with the integer part unchanged.
Since $J$ differs from $J'$ by at most all the $M-L$ least significant bits of the fractional part, we have:
\begin{equation}
    |J - J'| \leq \frac{1}{2^L} = \frac{1}{N^{k}}.
\end{equation}
Thus the overall difference of the Hamiltonian $H'$ (represented by the different input arguments $N, J'$) from the Hamiltonian $H$ in operator norm is:
\begin{equation}
    \|H - H'\| = |J-J'|\cdot\left\lVert\sum_{\nn{i,j}}\vec{\sigma}_{i}\cdot\vec{\sigma}_{j}\right\rVert \leq \frac{1}{N^{k}} \cdot a N^2 \triangleq \varepsilon,
\end{equation}
for some constant $a > 0$, where the last inequality is due to the operator norm of the Heisenberg Hamiltonian $\sum_{\nn{i,j}}\vec{\sigma}_{i}\cdot\vec{\sigma}_{j}$ scaling as the number of edges of the 2D lattice, $O(N^2)$.

By Weyl's Perturbation Theorem (Corollary III.2.6 in \cite{bha97}), we have:
\begin{equation}
    |\lambda(H)-\lambda(H')| \leq \|H - H'\| \leq \varepsilon.
\end{equation}
Therefore, reducing the instance of $H$ to an instance $H'$ with arguments $N, J', \alpha' \triangleq \alpha + \varepsilon, \beta' \triangleq \beta - \varepsilon$ is a valid reduction.
Moreover, $\alpha', \beta'$ satisfy:
\begin{equation}
     \beta' - \alpha' = \beta - \alpha - 2\varepsilon > \frac{1}{N^c} - \frac{2a}{N^{k-2}},
\end{equation}
with the last inequality due to the promise gap $\beta - \alpha > 1/N^{c}$.
Choosing $k=2c+2$ we obtain $\beta'-\alpha' > 1/N^{2c}$ (for all $N$ such that $N^c > 2a + 1$), satisfying an inverse-polynomial gap as necessary.
\footnote{Since $\varepsilon$ is inverse-polynomial in $N$, it cannot change the number of bits needed to represent $\alpha', \beta'$ compared to $\alpha, \beta$ by more than $O(\log(N))$.}

\paragraph{Integer parts of $\alpha, \beta$}
For any Hamiltonian $H$, by definition of the operator norm $\|H\|$ we can bound the ground state energy $\lambda(H)$ by:
\begin{equation}
    |\lambda(H)| \leq \|H\|.
\end{equation}
Due to this bound, any problem instance with $\alpha \leq -\|H\|$ cannot be a `Yes' instance, as $\lambda(H) < \alpha \leq -\|H\|$ is impossible.
Similarly, any problem instance with $\beta \geq \|H\|$ cannot be a `No' instance.
Moreover, the operator norm $\|H\|$ itself is upper-bounded by:
\begin{equation}
    \|H\| = J\cdot\left\lVert\sum_{\nn{i,j}}\vec{\sigma}_{i}\cdot\vec{\sigma}_{j}\right\rVert \leq J\cdot a N^2,
\end{equation}
for some constant $a > 0$.
Note the upper-bound $P \triangleq J\cdot a N^2$ is a polynomial function, due to $0 < J < Poly(N)$.

Thus, given an input Hamiltonian $H$ represented by the input arguments $N, J, \alpha, \beta$ where the integer part of $\alpha$ or $\beta$ are specified by $\Omega(\log(N))$ bits, the reduction algorithm is as follows:
\begin{enumerate}
    \item Compute $P \triangleq J\cdot a N^2$
    \item If $\alpha \leq -P$ (and therefore $\alpha \leq -\|H\|$): output a fixed `No' instance with $O(\log(N))$ bits in the integer parts of $\alpha, \beta$
    \item If $\beta \geq P$ (and therefore $\beta \geq \|H\|$): output a fixed `Yes' instance with $O(\log(N))$ bits in the integer parts of $\alpha, \beta$
    \item Otherwise, i.e. $\alpha, \beta \in (-P,P)$: output an instance with equal arguments $N, J$ but with the integer parts of $\alpha, \beta$ truncated to $O(\log(P)) = O(\log(N))$ bits
\end{enumerate}
We note that the reduction is valid for all input Hamiltonians that satisfy the promise ($\lambda(H) < \alpha$ or $\lambda(H) > \beta$).

\paragraph{Fractional parts of $\alpha, \beta$}
A reduction of the fractional parts of $\alpha, \beta$ to $O(\log(N))$ bits is obtained similarly to the reduction of the fractional part of $J$.
For a detailed description, see Appendix~\ref{sec:frac_ab}.

\subsection{Mahaney's and Fortune's Theorems for Promise Problems} \label{subsec:mah_promise}
Having seen that Problem~\ref{prob:heis} can be reduced (in polynomial time) to Problem~\ref{prob:sparse_heis} (Sparse 2D TI Heisenberg), the next step is a straightforward formulation of Mahaney's and Fortune's theorems (described in Subsection~\ref{subsec:mah_for}) for promise problems:
\begin{proposition}[Fortune's theorem for promise problems]\label{prop:for_promise}
If there exists a sparse promise problem $\Pi$ that is \class{coNP-Hard}, then $\class{P} = \class{NP}$
\end{proposition}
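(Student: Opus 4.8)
The plan is to reduce the promise-problem statement to the language version of Fortune's theorem (Theorem~\ref{thm:for}) that we may already assume. Suppose $\Pi = (\Pi_{Yes}, \Pi_{No})$ is a sparse \class{coNP-Hard} promise problem, so by Definition~\ref{def:sparse_pp} the language $L \triangleq \Pi_{Yes}$ is sparse, and by \class{coNP-Hardness} there is a polynomial-time mapping reduction from $\overline{SAT}$ (or any fixed \class{coNP-Hard} language) to $\Pi$ in the promise sense: every ``yes'' instance of $\overline{SAT}$ maps into $\Pi_{Yes}$ and every ``no'' instance maps into $\Pi_{No}$. First I would observe that this same map $f$ is then a valid many-one reduction from $\overline{SAT}$ to the \emph{language} $L = \Pi_{Yes}$: if $x \in \overline{SAT}$ then $f(x) \in \Pi_{Yes} = L$, and if $x \notin \overline{SAT}$ then $f(x) \in \Pi_{No}$, which is disjoint from $\Pi_{Yes} = L$, hence $f(x) \notin L$. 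Therefore $L$ is an honest \class{coNP-Hard} \emph{language} that is also sparse, and Theorem~\ref{thm:for} immediately yields $\class{P} = \class{NP}$.

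The key steps, in order, are: (i) unpack the definitions of sparse promise problem and of \class{coNP-Hardness} for promise problems, isolating the reduction map $f$; (ii) verify that the disjointness $\Pi_{Yes} \cap \Pi_{No} = \emptyset$ is exactly what is needed to upgrade the promise reduction into a language reduction to $L = \Pi_{Yes}$ — the ``no'' instances land outside $L$ precisely because they land in $\Pi_{No}$; (iii) note that $L$ inherits sparsity directly from Definition~\ref{def:sparse_pp}; (iv) invoke Theorem~\ref{thm:for} on $L$. No new complexity-theoretic machinery is required; the content is entirely in checking that promise-style hardness is not weaker than language-style hardness once the target promise set $\Pi_{Yes}$ is itself the language in question.

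The only subtlety — and the step I expect to flag rather than being a genuine obstacle — is making sure the notion of ``\class{coNP-Hard} promise problem'' used here is the standard one (polynomial-time Karp reductions that respect the promise), so that the reduction $f$ behaves correctly on \emph{all} inputs, including instances of $\overline{SAT}$ whose images might a priori fall into the ``don't care'' region $\Sigma^* \setminus (\Pi_{Yes} \cup \Pi_{No})$. Because $\overline{SAT}$ is a total language with no promise gap, every $x$ is either a ``yes'' or a ``no'' instance, so $f(x)$ must land in $\Pi_{Yes}$ or in $\Pi_{No}$ respectively, and the don't-care region is never hit; this is what keeps the argument clean. I would state this explicitly in one sentence and then conclude. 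The analogous argument, reducing \class{NP-Hardness} of a sparse promise problem to Theorem~\ref{thm:mah} via the language $\Pi_{Yes}$, gives Mahaney's theorem for promise problems and can be remarked on immediately afterward.
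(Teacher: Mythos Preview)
Your proposal is correct and follows essentially the same approach as the paper: define $L \triangleq \Pi_{Yes}$, observe it is sparse by Definition~\ref{def:sparse_pp}, note that the promise reduction upgrades to a language reduction to $L$ because $\Pi_{No} \subseteq \bar{L}$, and then invoke Theorem~\ref{thm:for}. The only cosmetic difference is that the paper phrases the hardness step for an arbitrary $L' \in \class{coNP}$ rather than the specific choice $\overline{SAT}$, and the paper treats both propositions (Fortune and Mahaney) in a single combined proof.
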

\begin{proposition}[Mahaney's theorem for promise problems]\label{prop:mah_promise}
If there exists a sparse promise problem $\Pi$ that is \class{NP-Hard}, then $\class{P} = \class{NP}$
\end{proposition}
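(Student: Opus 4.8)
The plan is to reduce both propositions to the language versions already available to us as Theorem~\ref{thm:for} (Fortune) and Theorem~\ref{thm:mah} (Mahaney). The key observation is that, by Definition~\ref{def:sparse_pp}, the ``Yes'' set $L \triangleq \Pi_{Yes}$ of a sparse promise problem $\Pi = (\Pi_{Yes}, \Pi_{No})$ is \emph{already} a sparse language; so it suffices to show that the hardness of the promise problem $\Pi$ transfers to the ordinary language $L$, and then invoke the classical theorems.

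Concretely, I would first make explicit the notion of hardness in force for promise problems: $\Pi$ is \class{NP-Hard} (resp. \class{coNP-Hard}) if for every $L' \in \class{NP}$ (resp. $L' \in \class{coNP}$) there is a polynomial-time computable map $f$ with $x \in L' \Rightarrow f(x) \in \Pi_{Yes}$ and $x \notin L' \Rightarrow f(x) \in \Pi_{No}$; in particular such an $f$ always outputs an element of the promise set $\Pi_{Yes} \cup \Pi_{No}$. For Proposition~\ref{prop:mah_promise}, let $\Pi$ be a sparse \class{NP-Hard} promise problem and set $L \triangleq \Pi_{Yes}$, which is sparse by assumption. Fix any $L' \in \class{NP}$ (for instance $SAT$) together with the reduction $f$ above. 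Since $\Pi_{Yes} \cap \Pi_{No} = \emptyset$, we have $\Pi_{No} \subseteq \overline{\Pi_{Yes}} = \overline{L}$, and hence $x \in L' \iff f(x) \in \Pi_{Yes} = L$. Thus $f$ is a polynomial-time many-one reduction from $L'$ to the language $L$, so $L$ is an \class{NP-Hard} sparse language, and Theorem~\ref{thm:mah} yields $\class{P} = \class{NP}$. The proof of Proposition~\ref{prop:for_promise} is word-for-word the same, with \class{NP} replaced by \class{coNP} (reducing e.g. from $\overline{SAT}$) and Theorem~\ref{thm:for} invoked in place of Theorem~\ref{thm:mah}.

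I do not expect a real obstacle here: there is no estimate or construction to grind through, only a definitional check. The one point that deserves care is precisely the interface between ``hardness for promise problems'' and ``hardness for languages'': one must use that the reduction $f$ always lands strictly inside the promise (so that $f$ also witnesses $x \notin L' \Rightarrow f(x) \notin L$ via the disjointness of $\Pi_{Yes}$ and $\Pi_{No}$), which is exactly what lets a promise-reduction double as a bona fide many-one reduction to $\Pi_{Yes}$. Everything else is bookkeeping.
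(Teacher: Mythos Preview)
Your proposal is correct and matches the paper's proof essentially line for line: set $L \triangleq \Pi_{Yes}$, note it is sparse by Definition~\ref{def:sparse_pp}, observe that any hardness reduction $f$ to $\Pi$ is automatically a many-one reduction to $L$ because $\Pi_{No} \subseteq \overline{L}$, and then invoke the classical Mahaney (resp.\ Fortune) theorem. The only extra content you supply is the explicit appeal to disjointness of $\Pi_{Yes}$ and $\Pi_{No}$, which the paper leaves implicit.
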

The proof of both propositions follows:
\begin{proof}
Let $\Pi = (\Pi_{Yes}, \Pi_{No})$ be a sparse promise problem that is \class{(co)NP-Hard}.
Define a language $L = \Pi_{Yes}$.
Then $L$ is a sparse language by the definition of sparse promise problems (Definition~\ref{def:sparse_pp}).
Moreover, $L$ is \class{(co)NP-Hard}:
let $L' \in \class{(co)NP}$, then since $\Pi$ is \class{(co)NP-Hard} there exists a polynomial time function $f$ such that:
\begin{gather}
    x \in L' \Rightarrow f(x) \in \Pi_{Yes} = L, \\
    x \notin L' \Rightarrow f(x) \in \Pi_{No} \subseteq \bar{L}.
\end{gather}
Having shown the existence of a sparse language that is \class{(co)NP-Hard}, we can now apply Mahaney's (Fortune's) theorem to deduce $\class{P} = \class{NP}$
\end{proof}

\subsection{Conclusion of Claim~\ref{clm:main}} \label{subsec:conc}
With propositions~\ref{prop:for_promise}--\ref{prop:mah_promise} in place, we restate the main result:
\begin{repclaim}{1}[Restated]
    If $\class{P}\neq\class{NP}$, Problem~\ref{prob:heis} is neither $\class{NP-Hard}$ nor $\class{coNP-Hard}$.
\end{repclaim}
The proof of the claim is almost immediate:
\begin{proof}
Assume that Problem~\ref{prob:heis} is $\class{NP-Hard}$.
As proved in Subsection~\ref{subsec:red_to_sparse}, Problem~\ref{prob:heis} can be reduced in polynomial time to Problem~\ref{prob:sparse_heis}, meaning Problem~\ref{prob:sparse_heis} is $\class{NP-Hard}$ as well.
Since Problem~\ref{prob:sparse_heis} is a sparse promise problem that is \class{NP-Hard}, Proposition~\ref{prop:mah_promise} shows that $\class{P} = \class{NP}$.
Therefore, if Problem~\ref{prob:heis} is $\class{NP-Hard}$ then $\class{P} = \class{NP}$;
a similar proof shows that if Problem~\ref{prob:heis} is $\class{coNP-Hard}$, then $\class{P} = \class{NP}$.
\end{proof}

\subsection{\texorpdfstring{$d$}{d}-dimensional \texorpdfstring{$O(1)$}{O(1)}-Unit-Cell Hamiltonians} \label{subsec:uc-h}
We note that throughout the reduction described in Subsection~\ref{subsec:red_to_sparse}, we did not need the interaction to be of the specific form $\vec{\sigma}_i\cdot\vec{\sigma}_j$, but only needed its operator norm to be bounded by some constant $a = O(1)$.
Following this, we generalize the result to a much broader family of Hamiltonians, which we now define:

\begin{definition}[$d$-dimensional $O(1)$-Unit-Cell Hamiltonian] \label{def:uc-h}
A Hamiltonian $H$ is called $d$-dimensional $O(1)$-Unit-Cell if it is of the following form (on the $N^d$ square grid):
\begin{equation} \label{eq:uc-h}
	H=\sum_{i_1=1}^N\sum_{i_2=1}^N\dots\sum_{i_d=1}^N\widetilde{H}_{i_1,i_2,\ldots,i_d},
\end{equation}
where $\widetilde{H}$ is an arbitrary and \emph{constant} Hamiltonian on a ``Unit-Cell'' of $O(1)$ qubits:
\begin{equation}
    \widetilde{H}_{i_1,i_2,\ldots,i_d} = \sum_{Q\subseteq W_{i_1,i_2,\ldots,i_d}}J_Q \tilde{h}_Q,
\end{equation}
where $W_{i_1,i_2,\ldots,i_d}$ is the set of $O(1)$ qubits in the unit-cell indexed by $i_1,i_2,\ldots,i_d$; $J_Q$ are scalars with $|J_Q| = O(Poly(N))$; and $\tilde{h}_Q$ is a Hamiltonian on subset $Q$ with $\|\tilde{h}_Q\|=O(1)$.
\end{definition}
We emphasize that the Hamiltonians $\widetilde{H}_{i_1,i_2,\ldots,i_d}$ and $\widetilde{H}_{i'_1,i'_2,\ldots,i'_d}$ are of equal form, save for acting on different unit-cells.
We also note that the unit-cells are not necessarily disjoint, i.e. there may be qubits in the full Hamiltonian participating in more than one unit-cell.
For examples of unit-cells, see Figure~\ref{fig:uc}.

With Definition~\ref{def:uc-h} in place, we can define a local Hamiltonian ground state energy problem analogous to Problem~\ref{prob:heis} (for constant $d$), with $N$ specified in unary and $\widetilde{H}$ in binary, and claim the following:
\begin{claim} \label{clm:uc-h}
    If $\class{P}\neq\class{NP}$, the Hamiltonian promise problem of eq.~\eqref{eq:uc-h} is neither \class{NP-Hard} nor \class{coNP-Hard}.
\end{claim}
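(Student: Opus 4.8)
The plan is to mimic the proof of Claim~\ref{clm:main} almost verbatim, replacing the Heisenberg interaction $\vec{\sigma}_i\cdot\vec{\sigma}_j$ by the generic bounded-norm unit-cell term $\widetilde{H}_{i_1,\ldots,i_d}$ and carrying the dimension $d$ through the estimates. First I would define the ``sparse'' counterpart of the problem, analogous to Problem~\ref{prob:sparse_heis}: the same problem except that every binary-encoded scalar --- the coefficients $J_Q$ appearing in $\widetilde{H}$ together with $\alpha$ and $\beta$ --- is specified using only $O(\log N)$ bits, while $N$ is still given in unary with $N$ bits. Since $\widetilde{H}$ acts on $O(1)$ qubits, there are only $2^{O(1)} = O(1)$ subsets $Q$ and hence $O(1)$ coefficients $J_Q$, so the total binary content of such an instance is $O(\log N)$ bits; the input length is therefore $\Theta(N)$, and for each $N$ there are at most $2^{O(\log N)} = Poly(N)$ admissible instances. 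After the usual padding argument this makes the sparse version a sparse promise problem in the sense of Definition~\ref{def:sparse_pp}.

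Next I would give the polynomial-time reduction from the general problem of eq.~\eqref{eq:uc-h} to its sparse version, following Subsection~\ref{subsec:red_to_sparse} piece by piece. The integer part of each $J_Q$ is bounded by $Poly(N)$ (as $|J_Q| = O(Poly(N))$), hence already fits in $O(\log N)$ bits. For the fractional part of each $J_Q$, truncate it to $L = k\log N$ bits to obtain $J_Q'$ and a truncated Hamiltonian $H'$; by the triangle inequality, using $\|\tilde{h}_Q\| = O(1)$, the $N^d$ unit cells, and the $O(1)$ subsets per cell,
\begin{equation}
  \|H - H'\| \leq \sum_{i_1,\ldots,i_d}\sum_{Q} |J_Q - J_Q'|\,\|\tilde{h}_Q\| \leq N^d \cdot O(1)\cdot \frac{1}{N^{k}}\cdot O(1) \triangleq \varepsilon,
\end{equation}
so $\varepsilon = O(N^{d-k})$, which is inverse-polynomial in $N$ once $k$ is chosen large enough in terms of $d$ and the promise-gap exponent. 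Weyl's Perturbation Theorem then gives $|\lambda(H)-\lambda(H')| \leq \varepsilon$, so simultaneously truncating all $J_Q$ and replacing $\alpha,\beta$ by $\alpha+\varepsilon,\beta-\varepsilon$ is a valid reduction that preserves an inverse-polynomial gap (and, since $\varepsilon$ is inverse-polynomial, changes the bit-lengths of $\alpha,\beta$ by only $O(\log N)$). The integer and fractional parts of $\alpha,\beta$ are then handled exactly as in Subsection~\ref{subsec:red_to_sparse}, using $|\lambda(H)| \leq \|H\| \leq \sum_{i_1,\ldots,i_d}\|\widetilde{H}_{i_1,\ldots,i_d}\| \leq N^d\cdot\sum_{Q} |J_Q|\,\|\tilde{h}_Q\| = Poly(N)$: instances whose $\alpha$ or $\beta$ lies outside $(-\|H\|,\|H\|)$ are mapped to fixed ``No''/``Yes'' instances, and otherwise the relevant parts are truncated to $O(\log(\|H\|)) = O(\log N)$ bits.

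Finally, having exhibited a polynomial-time reduction from the general problem to a sparse promise problem, I would conclude exactly as in Subsection~\ref{subsec:conc}: if the problem of eq.~\eqref{eq:uc-h} were \class{NP-Hard} (resp.\ \class{coNP-Hard}), then so would be its sparse version, whence Proposition~\ref{prop:mah_promise} (resp.\ Proposition~\ref{prop:for_promise}) forces $\class{P}=\class{NP}$. The only place needing genuine care is the dependence on $d$: because the number of terms is $N^d$ rather than $O(N^2)$, the truncation precision $k$ must be taken as a function of $d$; this is harmless since $d$ is a fixed constant of the problem, but it is the one spot where the argument is not literally identical to the Heisenberg case. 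Overlapping unit cells cause no difficulty, since the triangle-inequality bounds on $\|H\|$ and on $\|H-H'\|$ do not care whether distinct cells share qubits.
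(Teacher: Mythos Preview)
Your proposal is correct and follows essentially the same approach as the paper: define a sparse version with $O(\log N)$-bit scalars, reduce to it by truncating the $J_Q$'s and $\alpha,\beta$ using Weyl's perturbation theorem and the $N^d$ bound on the number of unit cells, then invoke Propositions~\ref{prop:for_promise}--\ref{prop:mah_promise}. The only cosmetic difference is that the paper truncates the $J_Q$'s one subset $Q_0$ at a time and composes the resulting $O(1)$ reductions (choosing $k=2c+d$), whereas you truncate all $J_Q$ simultaneously and absorb the extra $O(1)$ factor into the triangle-inequality bound; both are equally valid.
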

\begin{proof}
We define a ``sparse'' version of the problem, with $\alpha, \beta$ given by $O(\log(N))$ bits and $\widetilde{H}$ given by $O(\log(N))$ bits (of which $O(\log(N))$ bits are needed to specify the $J_Q$ scalars, and $O(1)$ bits are needed to specify the $\tilde{h}_Q$ hamiltonians, such that the Hamiltonian in eq.~\eqref{eq:uc-h} may be reconstructed uniquely).
We note that $\widetilde{H}$ consists of $O(1)$ terms, since the power set of an $O(1)$-sized set unit-cell is still an $O(1)$-sized set.
Each of the $O(1)$ scalars $J_Q$ is polynomially large, and thus its integer part is trivially specified by $O(\log(N))$ bits.
Reducing the fractional part of $J_{Q_0}$ (for a \emph{single, specific} subset $Q_0$ of $W_{i_1,i_2,\ldots,i_d}$) to a $k\cdot\log(N)$-bit truncation $J_{Q_0}'$ (with corresponding Hamiltonian $H'$) follows analogously to the reduction of $J$ in Subsection~\ref{subsec:red_to_sparse};
holding the other arguments constant (including $J_Q'=J_Q$ for all $Q \neq Q_0$), and assuming $\|\tilde{h}_Q\| \leq a = O(1)$ for all $Q$, we have:
\begin{equation}
\begin{split}
    \|H-H'\| &= \left\lVert\sum_{i_1=1}^{N}\sum_{i_2=1}^{N}\dots\sum_{i_d=1}^{N}\sum_{Q\subseteq W_{i_1,i_2,\ldots,i_d}}(J_Q-J_Q')\tilde{h}_Q\right\rVert \\
    &= \left\lVert\sum_{i_1=1}^{N}\sum_{i_2=1}^{N}\dots\sum_{i_d=1}^{N}(J_{Q_0}-J_{Q_0}')\tilde{h}_{Q_0}\right\rVert \leq |J_{Q_0}-J_{Q_0}'| \cdot \|\tilde{h}_{Q_0}\| \cdot N^d \leq \frac{1}{N^{k}} \cdot a \cdot N^d.
\end{split}
\end{equation}
The reduction process is repeated for each $J_{Q}$ separately, and since the number of consecutive reductions is still constant, their composition is a polynomial reduction.
The rest of the proof follows identically to Subsections~\ref{subsec:red_to_sparse} and \ref{subsec:conc} (except that we choose $k = 2c + d$ instead of $k = 2c + 2$).
\end{proof}

We note that the Hamiltonian described by Definition~\ref{def:uc-h} includes many important translationally-invariant Hamiltonians as special cases.
In particular, the previously discussed Heisenberg Hamiltonian (eq.~\eqref{eq:heis}) with closed boundary conditions is a special case, with $d=2$, and $\widetilde{H}$ being a ``star'' of 5 qubits and 4 edges (Figure~\ref{fig:int_uc}).
Other important models included in this family are the translationally-invariant (with closed boundary conditions) Fermi-Hubbard and t-J Hamiltonians (Subsection~\ref{subsec:hubb}).

\begin{figure}[!ht]
	\centering
	\subfloat[2D Lattice Unit-Cell\label{fig:int_uc}]{%
        \includegraphics[width=0.14\textwidth]{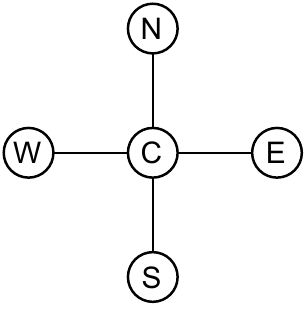}
	}
	\hspace{0.15\textwidth}
    \subfloat[Complex Unit-Cell\label{fig:complex_uc}]{%
        \includegraphics[width=0.14\textwidth]{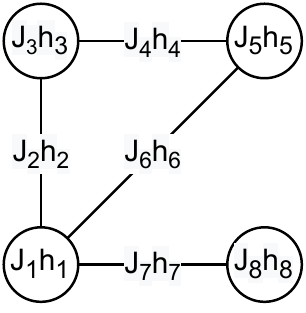}
    }
    \caption{(a) The graph of the closed-boundary 2D Heisenberg, Fermi-Hubbard and t-J Hamiltonians unit-cell.
    For the Heisenberg model (eq.~\eqref{eq:heis}), the Hamiltonian is:
    $\widetilde{H} = (J/2) \sum_{q\in A} \vec{\sigma}_C\cdot\vec{\sigma}_q$ and $A = \{N, E, S, W\}$.
    Each edge participates in 2 unit-cells, demonstrating that the unit-cells are not disjoint in this case.
    (b) An example of a more complicated Unit-Cell.
    Each qubit/edge has its own Hamiltonian $h_i$ and scalar $J_i$.}
    \label{fig:uc}
\end{figure}

Lastly, we note it is unknown whether the same claim is true for unit-cells of size $O(\log(N))$, since then we would na\"ively obtain a problem specified by an input of $O(\log^3(N))$ bits ($O(\log(N))$ vertices in the unit-cell, with at most $O(\log^2(N))$ edges, each edge associated with an interaction strength specified by $O(\log(N)$ bits).
Such a problem is of course not sparse by the current definition.
Thus to potentially prove such a result, one may either attempt to generalize Mahaney's and Fortune's theorems to include problems with such scaling or analyze more carefully the Hamiltonian in an attempt to establish tighter bounds.

\subsection{Two-dimensional (2D) translationally-invariant (TI) Fermi-Hubbard Hamiltonian} \label{subsec:hubb}
We can apply our result of Subsection~\ref{subsec:uc-h} to include models such as the 2D TI Fermi-Hubbard Hamiltonian.
The 2D TI Fermi-Hubbard Hamiltonian is a fermionic Hamiltonian and a generalization of the Heisenberg Hamiltonian defined in eq.~\eqref{eq:heis}.
On an $N \times N$ lattice, it is of the form:
\begin{equation} \label{eq:hubb}
    H_{\textrm{TI-Fermi-Hubbard}} = -t\sum_{\nn{i,j},s\in\{\uparrow,\downarrow\}}(a_{i,s}^{\dagger}a_{j,s}+a_{j,s}^{\dagger}a_{i,s})+U\sum_{i} n_{i,\uparrow}n_{i,\downarrow},
\end{equation}
where $t$ and $U$ are real scalars, and $a_{i,s}, a^{\dagger}_{i,s}$, $n_{i,s} = a^\dagger_{i,s}a_{i,s}$ are the fermionic annihilation, creation and number operators for spin $s \in \{\uparrow, \downarrow\}$ on lattice site $i$, respectively.
The number of electrons $0 < N_e < 2N^2$ is also specified with the Hamiltonian.
We note that in the case $U \gg t$ and $N_e = N^2$ (Half-filling regime), the Fermi-Hubbard Hamiltonian is reduced to the antiferromagnetic Heisenberg Hamiltonian.

We claim a similar result holds for the Fermi-Hubbard model:
\begin{claim} \label{clm:hubb}
    If $\class{P}\neq\class{NP}$, the Hamiltonian promise problem of eq.~\eqref{eq:hubb} is neither \class{NP-Hard} nor \class{coNP-Hard}.
\end{claim}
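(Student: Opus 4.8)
The plan is to mirror the proof of Claim~\ref{clm:uc-h}: first present a ``sparse'' version of the 2D TI Fermi-Hubbard problem, then reduce the general problem to it in polynomial time, and finally invoke Propositions~\ref{prop:for_promise}--\ref{prop:mah_promise}. The sparse version keeps $N$ in unary ($N$ bits) but specifies each of $t, U, \alpha, \beta$ using only $O(\log N)$ bits; the electron number $N_e$ needs no modification, since $0 < N_e < 2N^2$ already forces it into $O(\log N)$ bits. As in Subsection~\ref{subsec:red_to_sparse}, the total input length is $\Theta(N)$ and, for each $N$, there are at most $2^{O(\log N)} = Poly(N)$ choices for the binary arguments together with the single unary $1^N$, so (with appropriate padding) the sparse version is a sparse promise problem by Definition~\ref{def:sparse_pp}.

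For the reduction itself I would reuse the truncation argument of Subsection~\ref{subsec:red_to_sparse} essentially verbatim. The integer parts of $t$ and $U$ are specified by $O(\log(Poly(N))) = O(\log N)$ bits because both scalars are polynomially bounded, and $N_e$ is left unchanged. For the fractional parts of $t$ and $U$, truncate to $k\log N$ bits to obtain $t', U'$ and a Hamiltonian $H'$; since $\|a^\dagger_{i,s}a_{j,s} + a^\dagger_{j,s}a_{i,s}\| = O(1)$ and $\|n_{i,\uparrow} n_{i,\downarrow}\| = O(1)$, and there are $O(N^2)$ hopping and on-site terms, we get $\|H - H'\| \le (|t-t'| + |U-U'|)\cdot O(N^2) \le O(N^{2-k}) \triangleq \varepsilon$, so Weyl's Perturbation Theorem gives $|\lambda(H) - \lambda(H')| \le \varepsilon$ and we replace $\alpha, \beta$ by $\alpha + \varepsilon, \beta - \varepsilon$; choosing $k = 2c + 2$ preserves an inverse-polynomial promise gap. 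The integer and fractional parts of $\alpha, \beta$ are truncated exactly as in Subsection~\ref{subsec:red_to_sparse} and Appendix~\ref{sec:frac_ab}, using $|\lambda(H)| \le \|H\| \le (|t| + |U|)\cdot O(N^2) = Poly(N)$. Composing this constant number of reductions yields the desired polynomial reduction, and Propositions~\ref{prop:for_promise}--\ref{prop:mah_promise} then finish the proof exactly as in Subsection~\ref{subsec:conc}.

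An alternative, more conceptual route is to exhibit the Fermi-Hubbard Hamiltonian as a $d$-dimensional $O(1)$-Unit-Cell Hamiltonian (Definition~\ref{def:uc-h}) with $d = 2$ and apply Claim~\ref{clm:uc-h} directly. This requires a fermion-to-qubit encoding that preserves both $O(1)$-locality and translational invariance, so that the hopping terms do not turn into long Jordan--Wigner strings; the Verstraete--Cirac or Bravyi--Kitaev ``superfast'' encodings both work, at the cost of a larger (but still $O(1)$) unit cell containing ancilla qubits and the local stabilizer constraints. The subtle point in both routes --- and the step I expect to be the main obstacle --- is the restriction to the $N_e$-electron sector, a \emph{global} constraint absent from Definition~\ref{def:uc-h}: in the direct route it is harmless because $N_e$ is merely an $O(\log N)$-bit parameter carried unchanged through the reduction, while in the unit-cell route one must note that the total-number operator $\widehat N$ commutes with $H$ (and with its encoded image), so that ``ground energy in the $N_e$-sector'' is well defined and is preserved by the reduction; one should also observe that enforcing the constraint by an energy penalty $\mu(\widehat N - N_e)^2$ is \emph{not} an option, since it introduces non-local terms. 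The analogous argument handles the t-J model, whose hopping, spin-exchange, and number terms are likewise translationally invariant with $O(1)$ operator norm.
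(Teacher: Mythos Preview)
Your proposal is correct. Your ``direct route'' is a faithful specialization of the machinery of Subsection~\ref{subsec:red_to_sparse} and would stand on its own. The paper, however, takes precisely your ``alternative, more conceptual route'': it invokes Claim~\ref{clm:uc-h} as a black box, writing the unit cell of Figure~\ref{fig:int_uc} directly in terms of the fermionic operators $a_{i,s}^\dagger, a_{j,s}, n_{i,\uparrow}, n_{i,\downarrow}$, and observes that $N_e$ is already an $O(\log N)$-bit parameter. The key difference is that the paper does \emph{not} introduce any fermion-to-qubit encoding. Your worry about Jordan--Wigner strings destroying locality is well taken if one reads Definition~\ref{def:uc-h} literally (``qubits''), but it is irrelevant to the actual argument: the proof of Claim~\ref{clm:uc-h} uses only (i) that there are $O(1)$ terms per unit cell, each with $O(1)$ operator norm, (ii) that the scalars $J_Q$ are polynomially bounded, and (iii) Weyl's perturbation bound --- none of which depends on whether the local degrees of freedom are qubits or fermionic modes. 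So the Verstraete--Cirac / Bravyi--Kitaev detour, while not wrong, buys nothing here. Likewise, the $N_e$-sector constraint needs no penalty term or commutation argument for the sparsity proof; as you yourself note in the direct route, it is simply carried along as an extra $O(\log N)$-bit input, which is exactly how the paper treats it.
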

\begin{proof}
Firstly, we note that since $0 < N_e < 2N^2$ it is already specified by $O(\log(N))$ bits by definition.
The rest of the result follows as a corollary of Claim~\ref{clm:uc-h}.
Taking the unit-cell as described in Figure~\ref{fig:int_uc}, with:
\begin{equation}
    \widetilde{H} = -\frac{t}{2}\sum_{q\in A, s\in\{\uparrow,\downarrow\}}(a_{C,s}^{\dagger}a_{q,s}+a_{q,s}^{\dagger}a_{C,s})+U n_{C,\uparrow}n_{C,\downarrow},
\end{equation}
where $C$ is the center qubit of the unit-cell, and $A = \{N, E, S, W\}$.
We thus obtain that the Fermi-Hubbard Hamiltonian problem of eq.~\eqref{eq:hubb} is a special case of the 2-dimensional $O(1)$-Unit-Cell Hamiltonian problem of Definition~\ref{def:uc-h}.
\end{proof}
A similar result applies to the translationally-invariant t-J model, an intermediate model between Heisenberg and Fermi-Hubbard, as proved in Appendix~\ref{sec:tj}.

\section{Discussion} \label{sec:disc}
In this paper, we have looked at the problem of finding the ground state energy of a specific, yet widely studied Hamiltonian---the translationally-invariant Heisenberg Hamiltonian on the 2D square lattice.
Using complexity theoretic tools we were able to show that the problem, stated as a local Hamiltonian promise problem, cannot be \class{NP-Hard} or \class{coNP-Hard} under the assumption of $\class{P} \neq \class{NP}$.
We then generalized this result to a much larger family of translationally-invariant \emph{unit-cell} Hamiltonians, and to a translationally-invariant version of more general models---the Fermi-Hubbard and t-J models.

\paragraph{Potential positive results}
Our non-membership result, along with the already proven membership of the problem in \class{StoqMA}, raises questions about potential positive results for the problem.
Despite many possible options, we discuss three that we deem to be consequential.

The first possibility is that the problem is shown to be in \class{P} or \class{BPP}.
Since exact diagonalization of the Hamiltonian or even finding (analytically) its ground state energy still eludes scientists to this day, such a result would significantly improve our understanding of problems in condensed matter physics.

Another possibility is that the problem is shown to be in \class{BQP} (and not in \class{P} or \class{BPP}).
This possibility would hint at quantum supremacy, as it would show the problem to be efficiently solvable by quantum computers, but not by classical computers.

Lastly, the problem could be in \class{NP}.
Under the assumptions of $\class{P} \neq \class{NP}$ and that no \class{P} algorithm exists, the problem would thus be \class{NP-Intermediate}.
Although not proven, some noticeable problems like graph isomorphism and integer factorization are conjectured to be \class{NP-Intermediate}, with the latter even belonging to \class{BQP} due to Shor's algorithm \cite{shor99}.
Hence the problem could be an unusual example of an \class{NP-Intermediate} problem coming from the physical world of local Hamiltonian problems.

\paragraph{Potentially tractable special cases of \class{NP-Complete} problems}
The Ising model, a ``classical'' counterpart to the Heisenberg model (see Appendix~\ref{sec:ising} for details), was studied by Barahona in 1982 using a complexity-theoretic approach \cite{bar82}.
It was shown by \cite{bar82} that when the Ising Hamiltonian is restricted to a 2D square lattice and no 1-local terms, the problem of finding the ground state (and its energy) is in \class{P}, using Edmonds' algorithm from matching theory \cite{edm65}.
Barahona also showed, however, that when restricted to a 2D square lattice \emph{with} arbitrary 1-local terms, or when restricted to a 3D cubic lattice without 1-local terms, finding the ground state energy becomes an \class{NP-Complete} problem.

We note that using the results of Subsection~\ref{subsec:uc-h}, one may show that large subclasses of such Hamiltonians, i.e. Ising Hamiltonians with ``Unit-Cell'' translational-invariance, cannot, in fact, be \class{NP-Hard}.
Since many known \class{NP-Complete} problems can be formulated in terms of Ising Hamiltonian problems \cite{luc14}, such results may potentially lead to certain special cases of these problems being excluded from \class{NP-Complete}.

\paragraph{Possible further generalizations}
The techniques we used, and especially the reduction to a ``sparse'' version of the problem in Subsection~\ref{subsec:red_to_sparse}, could potentially be applied to similar Hamiltonian problems.
In particular, we conjecture the result could be further generalized to any local Hamiltonian problem specified by: a single system size related argument $N$ (specified in unary), $O(1)$ additional arguments of the Hamiltonian (each at most polynomially large in $N$, specified in binary) and energy threshold arguments $\alpha, \beta$ (specified in binary) satisfying an inverse-polynomial promise gap.

A similar conjecture can be made with $O(\log(N))$ additional arguments of the Hamiltonian (instead of $O(1)$).
In such a case, however, our techniques do not na\"ively apply (as mentioned in Subsection~\ref{subsec:uc-h}), and thus we leave the exploration and proof or refutation of this conjecture to future work.

\appendix

\section{Membership of Problem~\ref{prob:heis} in S\lowercase{toq}MA} \label{sec:stoqma}
In Subsection~\ref{subsec:prev_work}, we have argued the following:
\begin{fact} \label{fac:stoqma}
Problem~\ref{prob:heis} is in \class{StoqMA}.
\end{fact}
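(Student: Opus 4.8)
The plan is to give a polynomial-time many-one reduction from Problem~\ref{prob:heis} to the \emph{stoquastic} $k$-local Hamiltonian problem, which lies in \class{StoqMA} by \cite{bbt06}; this is precisely the route indicated by Observation~30 of \cite{cm16}. Recall that a local Hamiltonian presented as a sum of terms $\widehat H=\sum_r \widehat H_r$ is stoquastic when each $\widehat H_r$ has only real, non-positive off-diagonal entries in the computational basis. The antiferromagnetic Heisenberg interaction $J(X_iX_j+Y_iY_j+Z_iZ_j)$ is \emph{not} stoquastic as written, since on the two relevant qubits $X_iX_j+Y_iY_j = 2\bigl(|01\rangle\langle10| + |10\rangle\langle01|\bigr)$, which contributes strictly positive off-diagonal entries once it is scaled by $J>0$; the standard cure is a sublattice rotation, and most of the work is checking that this rotation does what we want while preserving locality and norms.

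First I would use that the 2D square lattice of Problem~\ref{prob:heis} is bipartite: fix a $2$-colouring of its vertices into sublattices $A$ and $B$ so that every edge $\nn{i,j}$ joins a vertex of $A$ to a vertex of $B$. (For the planar open-boundary lattice this is immediate; on the torus of the ``Topology'' remark it additionally needs $N$ even, which is where the odd-torus frustrated case genuinely falls outside this argument.) Define the on-site unitary $U \triangleq \prod_{i\in A} Z_i$. Conjugation by $U$ sends $X_i \mapsto -X_i$ and $Y_i\mapsto -Y_i$ for $i\in A$, and fixes every $Z_i$ and every operator supported on $B$. Hence for an edge between $i\in A$ and $j\in B$,
\begin{equation}
  U\bigl(X_iX_j+Y_iY_j+Z_iZ_j\bigr)U^{\dagger} = -X_iX_j - Y_iY_j + Z_iZ_j .
\end{equation}
On the qubits $i,j$ this equals $-2\bigl(|01\rangle\langle10| + |10\rangle\langle01|\bigr)$ plus the diagonal matrix $Z_iZ_j$, so for $J>0$ every edge term of $U H_{\text{TI-Heisenberg}} U^{\dagger}$ has real non-positive off-diagonal entries, i.e.\ is stoquastic; and a sum of such terms is again stoquastic, since each off-diagonal entry of the sum is a sum of non-positive numbers.

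Finally I would package this as the reduction: given an instance $(N,J,\alpha,\beta)$ of Problem~\ref{prob:heis}, output the stoquastic local Hamiltonian $H' \triangleq U H_{\text{TI-Heisenberg}} U^{\dagger} = J\sum_{\nn{i,j}}(-X_iX_j-Y_iY_j+Z_iZ_j)$ together with the same thresholds $\alpha,\beta$ and the same promise. Producing $H'$ is trivial (it only flips the signs of the $XX$ and $YY$ pieces), it keeps $2$-locality and $O(N^2)$ terms, and its term norms stay polynomially bounded because $0<J<Poly(N)$; since $U$ is unitary, $\lambda(H')=\lambda(H_{\text{TI-Heisenberg}})$, so ``Yes''/``No'' instances and the inverse-polynomial gap are preserved. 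This exhibits a polynomial-time reduction of Problem~\ref{prob:heis} to the stoquastic $k$-local Hamiltonian problem, which is in \class{StoqMA}. I do not expect a genuine obstacle here; the only points needing care are (i) bipartiteness of the chosen lattice (automatic in the open-boundary case, requiring $N$ even on the torus), and (ii) matching the ambient formulation of \class{StoqMA}, which takes as input a Hamiltonian given as a sum of individually stoquastic local terms with polynomially bounded norms and an inverse-polynomial promise gap — all of which $H'$ satisfies.
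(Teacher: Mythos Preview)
Your proof is correct and follows essentially the same route as the paper: a sublattice $Z$-rotation on a bipartition of the lattice conjugates $H_{\text{TI-Heisenberg}}$ to $J\sum_{\nn{i,j}}(-X_iX_j-Y_iY_j+Z_iZ_j)$, which is stoquastic and hence in \class{StoqMA} by \cite{bbt06}. Your write-up is in fact a bit more careful than the paper's in verifying term-wise stoquasticity and in flagging that the torus case requires $N$ even for bipartiteness.
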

The class \class{StoqMA} was first defined in \cite{bbt06}.
Besides their original definition of \class{StoqMA}, which extends an alternative definition of \class{MA}, \cite{bbt06} also showed that the local Hamiltonian problem for ``Stoquastic'' Hamiltonians is \class{StoqMA-Complete}.
A Stoquastic Hamiltonian is one for which the off-diagonal terms in the standard basis are real and non-positive.
For completeness, we now provide a short proof of Fact~\ref{fac:stoqma} (based on \cite{cm16}):
\begin{proof}
For a bipartition of the 2D lattice vertices $V=A\cup B$, apply a unitary transformation $U=\bigotimes_{a\in A}\mathbb{I}_{a}\otimes\bigotimes_{b\in B}Z_{b}$, which transforms $H_{\text{TI-Heisenberg}}$ of eq.~\eqref{eq:heis} into:
\begin{equation}
    H_{\text{Stoq}}=UH_{\text{TI-Heisenberg}}U^\dagger=J\sum_{\nn{a,b}}(-X_{a}X_{b}-Y_{a}Y_{b}+Z_{a}Z_{b})
\end{equation}
As unitary transformations do not change the spectrum of the Hamiltonian, the ground state energy of $H_{\text{Stoq}}$ is the same as of $H_{\text{TI-Heisenberg}}$.
$H_{\text{Stoq}}$ has non-positive off-diagonal elements (stoquastic form) and thus its ground state energy problem is in \class{StoqMA}, as shown by \cite{bbt06}.
Overall the problem is reduced to a problem in \class{StoqMA}.
\end{proof}

\section{The Ising Hamiltonian} \label{sec:ising}
A well-studied and highly related model to the Heisenberg Hamiltonian is the Ising Hamiltonian (see \cite{cip87} for an introductory review).
The Ising Hamiltonian is in a sense a ``classical'' version of the Heisenberg Hamiltonian, given in the general case by:
\begin{equation} \label{eq:ising}
	H_{\text{Ising}}=\sum_{i,j}J_{ij}Z_{i}Z_{j}+\sum_{i}B_{i}Z_{i}.
\end{equation}
As the Hamiltonian is diagonal, its ground state is a computational basis state.
Thus local Hamiltonian problems derived from eq.~\eqref{eq:ising} and its special cases are in the complexity class \class{NP}, with the hint being the ground state encoded as a bit string.
While an exact solution to the general Ising model is unknown, several special cases have been analyzed.
In 1925, Ising (after whom the model is named) solved the 1D model exactly \cite{isi25}.

\section{Fractional parts of \texorpdfstring{$\alpha, \beta$}{a, b} reduction} \label{sec:frac_ab}
We now describe in detail how to reduce the fractional parts of $\alpha, \beta$ of Problem~\ref{prob:heis} to $O(\log(N))$ bits.

Given an input Hamiltonian $H$ represented by the input arguments $N, J, \alpha, \beta$ where the fractional parts of $\alpha, \beta$ are specified by $M = \Omega(\log(N))$ bits, it follows by the promise gap that for some constant $c>0$:
\begin{equation} \label{eq:frac_ab_promise}
    \beta - \alpha > \frac{1}{N^{c}}.
\end{equation}
Let $L = k\cdot\log(N)$ for some $k > c$ to be determined later, and define $\alpha', \beta'$ as the truncations of $\alpha, \beta$ to $L = O(\log(N))$ bits in the fractional part, with the integer parts unchanged.
Thus we have for $\alpha'$ and $\beta'$:
\begin{gather}
    |\alpha - \alpha'| \leq \frac{1}{2^L} \label{eq:frac_ab_alpha}, \\ 
    |\beta - \beta'| \leq \frac{1}{2^L} \label{eq:frac_ab_beta}.
\end{gather}
Define $\alpha'' \triangleq \alpha' + 1/2^L$ and $\beta'' \triangleq \beta' - 1/2^L$.
We note that the fractional parts of $\alpha'', \beta''$ still require only $O(L) = O(\log(N))$ bits to represent.
Moreover, we have due to eqs.~\eqref{eq:frac_ab_promise}--\eqref{eq:frac_ab_beta}:
\begin{equation}
    \beta'' - \alpha'' = \beta' - \alpha' - \frac{2}{2^L} \geq \beta - \alpha - \frac{4}{2^L} > \frac{1}{N^c}-\frac{4}{N^k}.
\end{equation}
Thus, by choosing $k = 2c$ we obtain $\beta''-\alpha'' > 1/N^{2c}$ (for all $N$ such that $N^c > 5$), satisfying an inverse-polynomial gap.
The reduction therefore outputs the given instance with $\alpha'', \beta''$ instead of $\alpha, \beta$.
Validity of the reduction follows from eqs.~\eqref{eq:frac_ab_alpha}--\eqref{eq:frac_ab_beta}.

\section{Two-Dimensional (2D) translationally-invariant (TI) \lowercase{t}-J Hamiltonian} \label{sec:tj}
The t-J model arises from the Fermi-Hubbard model under the $U \gg t$ constraint, but when the regime is not half-filling (namely, $N_e \neq N^2$).
The Hamiltonian in the two-dimensional, translationally-invariant case is \cite{pla02}:
\begin{equation} \label{eq:tj}
	H_{\text{TI-t-J}} = -t\sum_{\nn{i,j},s\in\{\uparrow,\downarrow\}}(\tilde{a}^\dagger_{i,s}\tilde{a}_{j,s}+\tilde{a}^\dagger_{j,s}\tilde{a}_{i,s}) + J\sum_{\nn{i,j}}(\vec{S}_i\cdot\vec{S}_j-\frac{\tilde{n}_i \tilde{n}_j}{4}),
\end{equation}
where $t$ and $J$ are real scalars, $\tilde{a}_{i,s} = a_{i,s}(1 - n_{i,-s}), \tilde{a}^\dagger_{i,s} = (1 - n_{i,-s})a^\dagger_{i,s}$ are the projected annihilation/creation operators for a particle on site $i$ with spin $s \in\{\uparrow,\downarrow\}$ ($-s$ denoting the spin complementary to $s$); $\tilde{n}_i = \sum_{s} \tilde{a}^\dagger_{i,s}\tilde{a}_{i,s}$ is the projected number operator on site $i$ and $\vec{S}_i = \frac{1}{2}(\tilde{\sigma}^{x,i}, \tilde{\sigma}^{y,i}, \tilde{\sigma}^{z,i})$ is the spin operator on site $i$, with $\tilde{\sigma}^{\alpha,i} = \sum_{s,s'} \sigma^\alpha_{ss'}\tilde{a}^\dagger_{i,s}\tilde{a}_{i,s'}$.
As in the Fermi-Hubbard model, the number of electrons $N_e$ is specified along with the Hamiltonian.
\begin{claim} \label{clm:tj}
    If $\class{P}\neq\class{NP}$, the Hamiltonian promise problem of eq.~\eqref{eq:tj} is neither \class{NP-Hard} nor \class{coNP-Hard}.
\end{claim}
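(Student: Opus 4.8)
The plan is to show that the 2D TI t-J Hamiltonian of eq.~\eqref{eq:tj} is a special case of the $d$-dimensional $O(1)$-Unit-Cell Hamiltonian of Definition~\ref{def:uc-h} (with $d=2$), and then invoke Claim~\ref{clm:uc-h}, exactly mirroring the Fermi-Hubbard argument used to prove Claim~\ref{clm:hubb}. First I would dispose of the extra input parameter: in the t-J model no lattice site is doubly occupied, so $0 < N_e \le N^2$, and hence $N_e$ is already specified by $O(\log(N))$ bits and carrying it unchanged through the reduction does not affect sparsity. I would also note that both the original Hamiltonian and the truncated-scalar Hamiltonian produced by the reduction preserve total particle number, so restricting the spectrum to the $N_e$-electron sector commutes with the truncation and Weyl's perturbation bound still applies on that sector.

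Next I would exhibit the unit-cell: take the ``star'' unit-cell of Figure~\ref{fig:int_uc}, with center site $C$ and arms $A=\{N,E,S,W\}$, and set
\[
\widetilde{H} = -\frac{t}{2}\sum_{q\in A,\,s\in\{\uparrow,\downarrow\}}\bigl(\tilde{a}^\dagger_{C,s}\tilde{a}_{q,s}+\tilde{a}^\dagger_{q,s}\tilde{a}_{C,s}\bigr) + \frac{J}{2}\sum_{q\in A}\Bigl(\vec{S}_C\cdot\vec{S}_q-\frac{\tilde{n}_C\tilde{n}_q}{4}\Bigr).
\]
On the torus each edge $\nn{i,j}$ lies in exactly two unit-cells (one centered at $i$, one at $j$), so summing $\widetilde{H}$ over all $N^2$ centers reproduces eq.~\eqref{eq:tj} exactly, including the factor $1/2$. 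I would then verify the hypotheses of Definition~\ref{def:uc-h}: (i) the unit-cell involves $O(1)$ fermionic modes ($5$ sites, $10$ modes); (ii) the scalars $t/2$ and $J/2$ are $O(Poly(N))$ by the problem's input constraints; and (iii) each local term $\tilde{h}_Q$ --- namely $\tilde{a}^\dagger_{C,s}\tilde{a}_{q,s}+\mathrm{h.c.}$ on a pair of sites, and $\vec{S}_C\cdot\vec{S}_q-\tilde{n}_C\tilde{n}_q/4$ on a pair of sites --- has operator norm $O(1)$. Point (iii) holds because each projected operator $\tilde{a}_{i,s}=a_{i,s}(1-n_{i,-s})$, $\tilde{n}_i$, $\vec{S}_i$ is a fixed polynomial in the norm-$1$ operators $a_{i,s},a^\dagger_{i,s}$ acting on the two modes of a single site, so any product of $O(1)$ of them has norm $O(1)$, and bounding the $O(1)$ summands term by term gives a uniform bound $\|\tilde{h}_Q\| \le a = O(1)$. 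Once Definition~\ref{def:uc-h} is matched, Claim~\ref{clm:uc-h} applies directly and yields the statement.

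The main obstacle, and where I would spend the care, is confirming that the fermionic, fixed-particle-number, and projected-Hilbert-space features of the t-J model fit cleanly into the unit-cell framework of Definition~\ref{def:uc-h}, which was phrased for qubits: specifically, that the no-double-occupancy projection (which is global but factorizes site-by-site) does not spoil the translational structure, and that the norm bound on $\tilde{h}_Q$ is genuinely \emph{uniform} over all lattice sites and over the choice of arm $q$. The Fermi-Hubbard case (Claim~\ref{clm:hubb}) already establishes that fermionic modes can play the role of ``qubits'' in this framework, so modulo checking these projection-related technicalities, everything downstream --- the truncation of $t$, $J$, $\alpha$, $\beta$ to $O(\log(N))$ bits and the appeal to Propositions~\ref{prop:for_promise}--\ref{prop:mah_promise} --- is identical to Subsections~\ref{subsec:red_to_sparse}--\ref{subsec:conc} and to the proof of Claim~\ref{clm:uc-h}.
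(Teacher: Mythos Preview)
Your proposal is correct and takes essentially the same approach as the paper: the paper's proof simply states that the argument is analogous to the Fermi-Hubbard case (Claim~\ref{clm:hubb}) and exhibits precisely the same unit-cell $\widetilde{H}$ that you wrote down. Your write-up is considerably more detailed---verifying the $O(1)$ norm bounds, the $N_e$ bit count, and the compatibility of the fermionic/projected structure with Definition~\ref{def:uc-h}---but these are exactly the checks the paper leaves implicit in the word ``analogous.''
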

\begin{proof}
The proof is analogous to the proof of Claim~\ref{clm:hubb}, with the unit-cell:
\begin{equation}
    \widetilde{H} = -\frac{t}{2}\sum_{q\in A, s\in\{\uparrow,\downarrow\}}(\tilde{a}^\dagger_{C,s}\tilde{a}_{q,s}+\tilde{a}^\dagger_{q,s}\tilde{a}_{C,s}) + \frac{J}{2}\sum_{q\in A}(\vec{S}_C\cdot\vec{S}_q-\frac{\tilde{n}_C \tilde{n}_q}{4}).
\end{equation}
\end{proof}
\bibliographystyle{ACM-Reference-Format}
\bibliography{not_nph}

\end{document}